\documentclass[thmrestate,runningheads,anonymous]{llncs}
\usepackage{amsmath,amssymb,geometry}
\usepackage{xspace}
\usepackage{framed} 
\usepackage{thmtools}
\newcommand{\rETH}{r\textsf{ETH}\xspace}
\newcommand{\ALP}{{\sc $(A, \ell)$-Path Packing}\xspace}
\usepackage{thmtools}

\title{A Deterministic Separation Lemma}
\author{
Abhishek Sahu\\
\texttt{abhishek.sahu@niser.ac.in}\\[1ex]
National Institute of Science Education and Research (NISER), India\\
and\\
Homi Bhabha National Institute (HBNI),\\
Training School Complex, Anushakti Nagar, Mumbai 400094, India
}

\begin{document}
\author{Abhishek Sahu\inst{1,2}}
\authorrunning{A. Sahu}

\institute{
National Institute of Science Education and Research (NISER), Bhubaneswar, India
\and
Homi Bhabha National Institute (HBNI), Training School Complex, Anushakti Nagar, Mumbai 400094, India
\email{abhisheksahu@niser.ac.in}
}

\maketitle

\begin{abstract}
The \emph{Separation Lemma} is a simple yet powerful tool, akin to the well-known \emph{Isolation Lemma}, that guarantees the uniqueness of certain set sums. Bandopadhyay et al.\ introduced this lemma to establish lower bounds for the \ALP problem with respect to certain structural parameters, relying on random weight assignments in the process. The lemma's applicability extends well beyond that specific work, especially in proving hardness results. However, while effective, these hardness results inherently rely on probabilistic assumptions. In this work, we give a fully \emph{deterministic} construction for the weight assignment required by the Separation Lemma. We provide formal proofs of correctness, explicit examples, and show how deterministic weights can replace randomized ones, thereby derandomizing existing hardness results for path-packing problems. Our exposition highlights a clear progression from the original randomized foundations to deterministic constructions and their practical implications.
\end{abstract}

\section{Introduction}

Randomized constructions, such as the \emph{Isolation Lemma}~\cite{MulmuleyVaziraniVazirani1987}, have long been recognized as powerful tools in combinatorial optimization.  
The lemma states that if each element of a ground set is assigned an integer weight chosen independently and uniformly from a bounded range, then with high probability there exists a \emph{unique} minimum-weight set in any given family of subsets.  
This guarantee of a uniquely optimal solution under random, polynomially bounded weights is a cornerstone of many algorithms and hardness reductions: it enables parallel algorithms for perfect matching, simplifies counting arguments, and underlies reductions that require isolating a single combinatorial object among exponentially many candidates.
\begin{proposition}[Isolation Lemma, Mulmuley--Vazirani--Vazirani~\cite{MulmuleyVaziraniVazirani1987}]\label{prop:random-isolation}
Let $(X, \mathcal{F})$ be a finite set system, where $\mathcal{F} \subseteq 2^{X}$ is a family of subsets of $X$.  
Assign to each element $x \in X$ an integer weight $w(x)$ chosen independently and uniformly at random from the set $[M] = \{1,2,\dots,M\}$.  
Then, with probability at least 
\[
1 - \frac{|X|}{M},
\] 
there exists a \emph{unique} set $S^* \in \mathcal{F}$ whose total weight 
\[
w(S) = \sum_{x \in S} w(x)
\]
is minimum among all sets in $\mathcal{F}$. In particular, choosing $M = 2|X|$ guarantees that with probability at least $1/2$ the family $\mathcal{F}$ has a unique minimum–weight set.
\end{proposition}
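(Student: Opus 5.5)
The plan is the standard ``singular element'' argument of Mulmuley, Vazirani and Vazirani. We bound the probability of the bad event, namely that the minimum weight in $\mathcal{F}$ is attained by at least two sets, by a union bound over the elements of $X$. We may assume $\mathcal{F}\neq\emptyset$; otherwise the statement is vacuous, or we simply exclude this case. Call an element $x\in X$ \emph{ambiguous} if some minimum-weight set of $\mathcal{F}$ contains $x$ and some other minimum-weight set of $\mathcal{F}$ does not. If $S_1\neq S_2$ are both of minimum weight, any $x$ in the symmetric difference $S_1\triangle S_2$ is ambiguous. Hence non-uniqueness implies that some ambiguous element exists, and it suffices to show $\Pr[x \text{ is ambiguous}]\le 1/M$ for each fixed $x$.

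Fix $x$ and condition on the weights $w(y)$ of all $y\neq x$. Define
\[
A=\min\{w(S): S\in\mathcal{F},\ x\notin S\},\qquad
B=\min\{w(S\setminus\{x\}): S\in\mathcal{F},\ x\in S\},
\]
with the convention $\min\emptyset=+\infty$. Both quantities depend only on the conditioned weights, not on $w(x)$. The minimum weight over sets avoiding $x$ is $A$, and over sets containing $x$ it is $B+w(x)$. For $x$ to be ambiguous, both classes must contain a minimum-weight set, which forces $A=B+w(x)$. Equivalently, $w(x)=A-B$, a single value determined by the other weights; if either of $A,B$ is infinite, ambiguity is impossible. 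Since $w(x)$ is independent of the other weights and uniform on $[M]$, it equals this specific value with probability at most $1/M$. Averaging over the conditioning gives $\Pr[x\text{ ambiguous}]\le 1/M$.

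The union bound over the $|X|$ elements then gives failure probability at most $|X|/M$, so uniqueness holds with probability at least $1-|X|/M$. Substituting $M=2|X|$ gives probability at least $1/2$.

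The only delicate step is the independence claim. We must check that $A$ and $B$ really do not depend on $w(x)$, which is why $B$ is defined with $x$ removed from each set. We must also check that the event ``$x$ ambiguous'' is exactly captured by the equation $A=B+w(x)$. Once this conditioning argument is set up correctly, the rest is routine.
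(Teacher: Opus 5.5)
Your argument is correct: it is the standard singular-element (threshold) proof of Mulmuley, Vazirani and Vazirani. The paper gives no proof of this proposition and only cites it, so there is no different route to compare against. One small correction: for $\mathcal{F}=\emptyset$ the conclusion is false, not vacuous, so nonemptiness of $\mathcal{F}$ is a genuine implicit hypothesis of the statement. Your decision to exclude that case is the right one.
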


 The \emph{Separation Lemma}, introduced by Bandopadhyay et al.~\cite{DBLP:conf/mfcs/BandopadhyayBMS24}, can be seen as a conceptual sister to the \emph{Isolation Lemma}: though simpler in its formulation, it offers a similarly broad potential for applications, particularly in establishing lower bounds for structured combinatorial problems.

\begin{proposition}[Separation Lemma, Bandopadhyay et al.~\cite{DBLP:conf/mfcs/BandopadhyayBMS24}]\label{prop:random-separation}
Let $(X, \mathcal{F})$ be a finite set system, where $\mathcal{F} \subseteq 2^X$ is a family of subsets of $X$. Suppose each element $x \in X$ is assigned a weight $w(x)$ independently and uniformly at random from the set $[M] = \{1, 2, \dots, M\}$. Then, with probability at least 
\[
1 - \frac{{|\mathcal{F}| \choose 2}}{M},
\] 
the induced weights of all sets in $\mathcal{F}$ are distinct; that is, for all distinct $S_1, S_2 \in \mathcal{F}$, 
\[
w(S_1) = \sum_{x \in S_1} w(x) \neq \sum_{x \in S_2} w(x) = w(S_2).
\]

In particular, choosing $M=2|\mathcal{F}|^2$ ensures that with probability at least $1/2$ every set in $\mathcal{F}$ has a unique weight.
\end{proposition}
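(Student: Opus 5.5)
The plan is a union bound over unordered pairs of distinct sets in $\mathcal{F}$. For each such pair it suffices to show that the ``collision'' event $w(S_1)=w(S_2)$ has probability at most $1/M$. Summing over the $\binom{|\mathcal{F}|}{2}$ pairs then gives failure probability at most $\binom{|\mathcal{F}|}{2}/M$, which is the stated bound.

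\textbf{Single-pair bound.} Fix distinct $S_1,S_2\in\mathcal{F}$. Since they differ, their symmetric difference is nonempty. Without loss of generality pick $x^*\in S_1\setminus S_2$; the other case is symmetric after swapping the roles of $S_1$ and $S_2$. The weights of elements of $S_1\cap S_2$ cancel, so
\[
w(S_1)=w(S_2) \iff w(x^*) = \sum_{x\in S_2\setminus S_1} w(x) - \sum_{x\in (S_1\setminus S_2)\setminus\{x^*\}} w(x).
\]
The right-hand side depends only on the weights of elements other than $x^*$. Condition on all those weights, which fixes the right-hand side to some integer $c$. Because the weights are independent, $w(x^*)$ is still uniform on $[M]$, so $\Pr[w(x^*)=c]\le 1/M$ (it is $0$ if $c\notin[M]$). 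Averaging over the conditioning gives $\Pr[w(S_1)=w(S_2)]\le 1/M$. This is the principle-of-deferred-decisions step, and it is the only real content of the proof. The one thing to handle carefully is that the distinguishing element may lie on either side; the symmetry argument above covers both cases.

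\textbf{Union bound and specialization.} Taking the union over all $\binom{|\mathcal{F}|}{2}$ unordered pairs gives failure probability at most $\binom{|\mathcal{F}|}{2}/M$. So with probability at least $1-\binom{|\mathcal{F}|}{2}/M$, all set weights are pairwise distinct. For the ``in particular'' clause, set $M=2|\mathcal{F}|^2$ and use $\binom{|\mathcal{F}|}{2}\le |\mathcal{F}|^2/2$. The failure probability is then at most $1/4$, which is certainly at most $1/2$.

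Nothing here is a genuine obstacle; the statement is a direct union-bound argument. Note that, unlike Proposition~\ref{prop:random-isolation}, the bound is in terms of $|\mathcal{F}|$ and not $|X|$. This is unavoidable, because we need to separate every pair of sets, not just isolate the minimum.
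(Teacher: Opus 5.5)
Your proof is correct; with $M=2|\mathcal{F}|^2$ it even gives failure probability below $1/4$. The paper does not prove this proposition at all: it is quoted from Bandopadhyay et al. Your union bound over pairs, together with the deferred-decisions observation that exactly one value of $w(x^*)$ causes a collision, is the standard argument.

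The closest proof in the paper is that of the Deterministic Separation Lemma, which is in effect a derandomization of your single-pair step. Order the elements and take $x^*=x_{i+1}\in S_1$ to be the \emph{last} element of $S_1\triangle S_2$. Then the unique bad value is $w(S_2\cap\{x_1,\dots,x_i\})-w(S_1\cap\{x_1,\dots,x_i\})$, a difference of weights of two prefix-projections that depends only on weights already fixed. Case~2 of the paper's inductive step is precisely the requirement that $w(x_{i+1})$ avoid such differences.

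The paper, however, maintains the stronger invariant that \emph{every} pair of sets in the projected family $\mathcal{F}_{\le i}$ is separated. It must therefore avoid all weights and pairwise differences over $\mathcal{F}_{\le i}$, which is where its $\gamma^2|\mathcal{F}|^2$ range comes from. Your per-pair viewpoint shows this is more than necessary. Charge each pair of $\mathcal{F}$ only to its last distinguishing element, and let $w(x_{i+1})$ be the least positive integer outside the bad values of the pairs charged to $x_{i+1}$. This separates all of $\mathcal{F}$ deterministically with weights in $\{1,\dots,\binom{|\mathcal{F}|}{2}+1\}$, exactly the range your probabilistic bound shows to be feasible, with no $\gamma^2$ factor.

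The randomized argument buys simplicity: any distinguishing element works, with no ordering or bookkeeping. The deterministic one buys the removal of the failure probability.
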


 Unlike the well-known \emph{Isolation Lemma}, which remains inherently randomized with no fully deterministic construction known despite decades of research, the \emph{Separation Lemma} admits a deterministic counterpart as we prove in this work (albeit with a polynomial explosion in range of weight values). Note that the central insight underlying the \emph{Separation Lemma} is that, when the number of objects or sets to be separated is polynomial in the input size, one can assign random, polynomial bounded integer weights so that all weights of sets/objects are distinct with a \emph{sufficiently high probability}. Our \emph{Deterministic Separation Lemma} builds on this idea, providing explicit weight assignments that replace randomized weight assignments making the lemma completely deterministic. The versatility of this approach is reflected in a range of illustrative applications:

\begin{itemize}
    \item \textbf{Paths in Trees:} By assigning weights to edges, every simple path can be guaranteed to have a unique total weight facilitating deterministic reductions to problems like {\sc Exact-length Path Packing} etc.

    \item \textbf{Triplets of Vertices (3D Matching):} Assigning weights to vertices ensures that weights of all triplets are distinct, facilitating deterministic reductions to problems like {\sc Bin filling} etc.

    \item \textbf{Triangles (3-Sized Sets of Edges):} Weighting edges so that every triangle weight is unique allows reductions in triangle-constrained combinatorial problems.
\end{itemize}

The common thread across these examples is that polynomial many sets or objects can be separated with polynomial bounded weights, providing deterministic guarantees.
Despite its usefulness, the \emph{Separation Lemma} in its original form is inherently probabilistic, limiting its applicability in contexts that require deterministic constructions. In this work, we focus on overcoming this limitation by developing a fully deterministic approach for assigning weights to elements of a set $X = \{x_1, \dots, x_n\}$, ensuring that all set weights $w(S) = \sum_{x \in S} w(x)$ are distinct for $S \subseteq X$. In particular, 

\begin{framed}
\begin{theorem}[Deterministic Separation Lemma]\label{lem:deterministic-separation}
Let $(X, \mathcal{F})$ be a finite set system, where $\mathcal{F} \subseteq 2^X$ is a family of subsets of $X$. Then there exists a deterministic assignment of positive integer weights 
\[
w: X \to \{1, 2, \dots, \gamma^2|\mathcal{F}|^2\}
\] 
such that for all distinct $S_1, S_2 \in \mathcal{F}$, \[w(S_1) = \sum_{x \in S_1} w(x) \neq \sum_{x \in S_2} w(x) = w(S_2),\] where $\gamma$ is the cardinality of a largest set in $\mathcal{F}$ and can always be upper-bounded by $n$.
\end{theorem}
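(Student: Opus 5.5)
The plan is to derandomize Proposition~\ref{prop:random-separation} by the \emph{method of conditional expectations}. If we aim for weights in $[M]$ with $M \ge |\mathcal{F}|^2$, we land inside the claimed range $\{1,\dots,\gamma^2|\mathcal{F}|^2\}$, since $\gamma\ge 1$ (the case $\mathcal{F}=\{\emptyset\}$ is trivial). Fix an ordering $x_1,\dots,x_n$ of $X$. For each unordered pair $\{S_1,S_2\}$ of distinct sets in $\mathcal{F}$, let $C_{S_1,S_2}$ be the indicator of the bad event $w(S_1)=w(S_2)$. Put $Z=\sum_{\{S_1,S_2\}} C_{S_1,S_2}$.

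\textbf{Step 1 (initial bound).} Under uniform independent weights from $[M]$, the symmetric difference $S_1\triangle S_2$ is nonempty. Condition on all weights except one element $x$ of $S_1\triangle S_2$. The equation $w(S_1)=w(S_2)$ then fixes $w(x)$ to at most one value, so $\Pr[C_{S_1,S_2}=1]\le 1/M$. This is exactly the argument behind Proposition~\ref{prop:random-separation}. It gives
\[
\mathbb{E}[Z]\le \binom{|\mathcal{F}|}{2}/M < 1 \quad\text{for } M=|\mathcal{F}|^2 .
\]

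\textbf{Step 2 (sequential fixing).} For $i=1,\dots,n$ in turn, choose $w(x_i)\in[M]$ to minimize
\[
\mathbb{E}\bigl[Z \mid w(x_1),\dots,w(x_i)\bigr],
\]
with the remaining weights still uniform. By averaging, some choice does not increase the conditional expectation, so the invariant $\mathbb{E}[Z\mid\cdot]<1$ is maintained. After step $n$, $Z$ is a deterministic nonnegative integer strictly less than $1$, hence $Z=0$. Thus all set weights are distinct.

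\textbf{Step 3 (computability; the main obstacle).} Each conditional probability must be computable exactly and deterministically in polynomial time in $n,|\mathcal{F}|,M$. For a pair $(S_1,S_2)$ with some weights fixed, the event is a single linear equation
\[
\sum_{x\in F_1} w(x) - \sum_{x\in F_2} w(x) = c ,
\]
where $F_1,F_2$ are the still-free elements of $S_1\setminus S_2$ and $S_2\setminus S_1$, and $c$ is determined by the fixed weights. The probability of this event is the number of solutions with coordinates in $[M]$ divided by $M^{|F_1|+|F_2|}$. The count is obtained by a straightforward dynamic program (iterated convolution) over the free elements, tracking the partial signed sum. That sum ranges over an interval of length at most $nM$, so the table has polynomial size. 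Summing over the $\binom{|\mathcal{F}|}{2}$ pairs, over the $M$ candidate values, and over the $n$ rounds gives a polynomial-time deterministic procedure.

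I expect this step to need the most care: handling exact rational arithmetic and confirming the DP's polynomial bounds. The resulting bound $|\mathcal{F}|^2$ is at most the stated $\gamma^2|\mathcal{F}|^2$, which completes the theorem. If a more explicit construction were preferred, a fallback is to take $w(x_i)=2^{i}\bmod p$ for a prime $p$ avoiding all prime divisors of the nonzero differences $\sum_{S_1}2^{i}-\sum_{S_2}2^{i}$. Its range is only $O(n|\mathcal{F}|^2\log)$, so I would use the conditional-expectation route as the primary plan.
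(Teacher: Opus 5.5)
Your proof is correct, but it follows a different route from the paper. The paper never uses the randomized lemma. It fixes an order $x_1,\dots,x_n$ and maintains a stronger invariant: all prefix-projections $S\cap\{x_1,\dots,x_j\}$ with $S\in\mathcal{F}$ and $j\le i$ (the family $\mathcal{F}_{\le i}$) have distinct weights. It then takes $w(x_{i+1})$ to be the least positive integer that is neither a weight nor a pairwise weight difference of sets in $\mathcal{F}_{\le i}$. A three-case check, according to whether $x_{i+1}$ lies in both, one, or neither of two colliding sets, propagates the invariant. The range follows from $|G_i|<|\mathcal{F}_{\le i}|^2$ together with $|\mathcal{F}_{\le n}|\le\gamma|\mathcal{F}|$.

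The paper's argument is elementary and clearly polynomial-time. Its factor $\gamma^2$ is exactly the cost of protecting every projection rather than only the pairs of $\mathcal{F}$. Your conditional-expectation argument protects only the $\binom{|\mathcal{F}|}{2}$ genuine pairs, so it gives the sharper range $|\mathcal{F}|^2$, independent of $\gamma$. The price is the exact counting DP with big-integer arithmetic in your Step 3.

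That DP can be dropped. Use the pessimistic estimator that charges $1/M$ to each pair whose symmetric difference still contains an unfixed element, and the exact collision indicator otherwise. Fixing $w(x_i)$ changes only the pairs whose symmetric difference has largest element $x_i$. For each such pair, at most one value of $w(x_i)$ is bad, namely the value that equalizes the two fixed prefix sums. So some choice of $w(x_i)$ does not increase the estimator.

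Unwound, this is the paper's greedy with a much smaller forbidden set: one equalizing value per such pair. It shows that weights in $\{1,\dots,\binom{|\mathcal{F}|}{2}+1\}$ already suffice, with no probability and no DP.
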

\end{framed}

\newcounter{thmsave}
\setcounter{thmsave}{\value{theorem}}

\makeatletter
\renewcommand{\thetheorem}{\ref{lem:deterministic-separation}}
\makeatother

This deterministic construction not only \emph{almost} replicates the guarantees of the randomized \emph{Separation Lemma} but also strengthens existing hardness results by removing probabilistic assumptions, offering a method with broad potential applications.


\section{Deterministic Separation Lemma (DSL)}




\paragraph{Intuition.} 
We assign weights using a greedy strategy by processing the elements of $X$ one by one in a fixed order. 
At each step $i$, we maintain the following invariant: \emph{all sets in $\mathcal{F}_{\leq i}=\bigcup_{j=1}^{i}\mathcal{F}_j$ where $\mathcal{F}_j=\{S\cap\{x_1,\cdots,x_j\}| S\in \mathcal{F}\}$, have distinct total weights}. Note that $\mathcal{F}_j$ is basically projection of all sets in $\mathcal{F}$ onto first $j$ elements. We further define a set to achieve this invariant thoughout our weight assignment process, we maintain a set of forbidden values at each step, consisting of the weights of all sets in $\mathcal{F}_{\leq i}$ and all pairwise differences of these weights; we denote this set by $G_i$. When considering the next element $x_{i+1}$, we assign it the \emph{smallest positive integer} not in $G_i$, which is sufficient to preserve the same invariant after adding $x_{i+1}$. We then update the forbidden set $G_{i+1}$ based on $\mathcal{F}_{\leq i+1}$ and continue this process until all elements have been assigned weights. At each step $i$, the number of forbidden values can be upper bounded by $|\mathcal{F}_{\leq i}|^2$, which in turn bounds the weight assigned to the next element. Repeating this greedy strategy for all $n$ elements yields a deterministic weight assignment in which the total weight of every set in $\mathcal{F}$ is unique, and all element weights lie in the range $[1,|\mathcal{F}|^2]$. We provide the formal arguments below.

\begin{theorem}[Deterministic Separation Lemma]
\begingroup
  \edef\@currentlabel{\ref{lem:deterministic-separation}}%
  \label{lem:deterministic-separation1}%
\endgroup
Let $(X, \mathcal{F})$ be a finite set system, where $\mathcal{F} \subseteq 2^X$ is a family of subsets of $X$. Then there exists a deterministic assignment of positive integer weights 
\[
w: X \to \{1, 2, \dots, \gamma^2|\mathcal{F}|^2\}
\] 
such that for all distinct $S_1, S_2 \in \mathcal{F}$, \[w(S_1) = \sum_{x \in S_1} w(x) \neq \sum_{x \in S_2} w(x) = w(S_2),\] where $\gamma$ is the cardinality of a largest set in $\mathcal{F}$ and can always be upper-bounded by $n$.
\end{theorem}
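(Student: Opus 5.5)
The plan is to build the weights greedily, one element at a time, and keep an invariant about \emph{projections} of the sets in $\mathcal{F}$. Fix an order $x_1,\dots,x_n$ of $X$. For $0\le i\le n$ let $\mathcal{F}_i=\{S\cap\{x_1,\dots,x_i\} : S\in\mathcal{F}\}$, taken as a set, so that coinciding projections are identified. Thus $\mathcal{F}_0=\{\emptyset\}$, $\mathcal{F}_n=\mathcal{F}$, and $|\mathcal{F}_i|\le|\mathcal{F}|$ for every $i$. The invariant is:
\[
(\mathrm{I}_i)\qquad \text{all members of } \mathcal{F}_i \text{ have pairwise distinct weights under } w(x_1),\dots,w(x_i).
\]
The statement we want is exactly $(\mathrm{I}_n)$. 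The base case $(\mathrm{I}_0)$ holds trivially.

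For the inductive step, assume $(\mathrm{I}_i)$. Define the forbidden set
\[
G_i=\{\,w(B)-w(A) : A,B\in\mathcal{F}_i,\ w(B)>w(A)\,\}.
\]
Set $w(x_{i+1})$ to be the smallest positive integer not in $G_i$.

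Every member of $\mathcal{F}_{i+1}$ has the form $A$ or $A\cup\{x_{i+1}\}$ with $A\in\mathcal{F}_i$. Take two distinct members $P,Q$ of $\mathcal{F}_{i+1}$ and check the three possible cases.
\begin{itemize}
\item If neither contains $x_{i+1}$, then $P,Q$ are distinct members of $\mathcal{F}_i$, and $(\mathrm{I}_i)$ applies directly.
\item If both contain $x_{i+1}$, say $P=A\cup\{x_{i+1}\}$ and $Q=B\cup\{x_{i+1}\}$, then $A\neq B$. Also $w(P)-w(Q)=w(A)-w(B)\neq 0$ by $(\mathrm{I}_i)$.
\item If only $P=A\cup\{x_{i+1}\}$ contains it and $Q=B$, then $w(P)=w(Q)$ would force $w(x_{i+1})=w(B)-w(A)$. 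If $A\neq B$, this is a positive value in $G_i$, which is excluded by the choice of $w(x_{i+1})$. If $A=B$, it would force $w(x_{i+1})=0$, which is impossible since the weight is positive.
\end{itemize}
Hence $(\mathrm{I}_{i+1})$ holds. This case analysis is the heart of the argument. The step that needs care is making sure that distinctness of \emph{projections}, rather than of the original sets, is the right invariant: two sets of $\mathcal{F}$ may coincide on a prefix. Identifying equal projections makes both the invariant and the counting of $G_i$ consistent.

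For the range bound, note that $|G_i|\le\binom{|\mathcal{F}_i|}{2}\le\binom{|\mathcal{F}|}{2}$. The least positive integer outside $G_i$ is therefore at most $\binom{|\mathcal{F}|}{2}+1\le|\mathcal{F}|^2\le\gamma^2|\mathcal{F}|^2$. The degenerate case $\gamma=0$, i.e.\ $\mathcal{F}\subseteq\{\emptyset\}$, needs no weights to separate anything: any weights work, for instance all weights $1$. Apart from this case, $\gamma\ge1$, and the bound $\gamma^2|\mathcal{F}|^2$ follows, which is in fact slightly stronger than stated.

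If one instead follows the paper's intuition and uses the cumulative family $\mathcal{F}_{\le i}$, the count becomes the main obstacle. Each $S$ contributes at most $|S|+1\le\gamma+1$ distinct projections, with $\emptyset$ shared by all, so $|\mathcal{F}_{\le i}|\le\gamma|\mathcal{F}|+1$. Then $\binom{\gamma|\mathcal{F}|+1}{2}+1\le\gamma^2|\mathcal{F}|^2$ whenever $\gamma|\mathcal{F}|\ge2$, and the remaining tiny cases are checked by hand. This reproduces exactly the stated bound. The construction is fully deterministic, since it depends only on the fixed order of $X$, and it runs in polynomial time in $n$ and $|\mathcal{F}|$.
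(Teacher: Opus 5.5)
Your proof is correct. It uses the same greedy rule and the same three-case induction as the paper, but with a sharper invariant, and it proves a strictly stronger bound.

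The paper maintains distinct weights over the cumulative family $\mathcal{F}_{\le i}=\bigcup_{j\le i}\mathcal{F}_j$ and forbids both set weights and pairwise differences. It then bounds $|\mathcal{F}_{\le n}|$ by $\gamma|\mathcal{F}|$ to reach the range $\gamma^2|\mathcal{F}|^2$. You observe that in each case of the inductive step, a member of $\mathcal{F}_{i+1}$ is reduced only to a member of $\mathcal{F}_i$, never to a shorter prefix. So distinctness on the current level alone propagates. Since $S\mapsto S\cap\{x_1,\dots,x_i\}$ maps $\mathcal{F}$ onto $\mathcal{F}_i$, every weight is at most $\binom{|\mathcal{F}|}{2}+1\le|\mathcal{F}|^2$, with no $\gamma$ factor. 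This buys three things:
\begin{itemize}
\item it matches the range of the randomized lemma;
\item it subsumes the paper's ``alternate'' lemma;
\item it would actually give the $O(n^4)$ claimed in the tree-path theorem, whose proof via $\gamma^2|\mathcal{F}|^2$ only yields $O(n^6)$.
\end{itemize}
The cumulative invariant additionally separates prefixes of different lengths, but nothing in the argument uses that.

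Your side remark on the cumulative version is also more careful than the paper. The paper's count $|\mathcal{F}_{\le n}|\le\gamma|\mathcal{F}|$ omits the empty projection, and its inequality $m+\binom{m}{2}<m^2$ needs $m\ge 2$.

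The one loose end is $\gamma=0$ with $X\neq\emptyset$. There the stated range $\{1,\dots,\gamma^2|\mathcal{F}|^2\}$ is empty, so your ``all weights $1$'' does not fit it and the statement itself fails. That is a defect of the statement, not of your argument.
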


\medskip

\begin{proof}

We construct the weights inductively according to an arbitrary fixed ordering $x_1< x_2< \dots< x_n$ of $X$. At each step $i$, we maintain the following invariant:

\begin{itemize}
    \item For any two distinct sets $A, B \in {\mathcal{F}_{\leq i}}$, their weights are distinct, that is,
    $w(A)  \neq  w(B)$.
\end{itemize}

To maintain the invariant when assigning a weight to $x_{i+1}$, we consider all sets contained in $\mathcal{F}_{\leq i}$. We then assign $x_{i+1}$ the \emph{smallest positive integer} that is neither the weight of any set in $\mathcal{F}_{\leq i}$ nor the difference of weights of any two sets in $\mathcal{F}_{\leq i}$; we refer to this collection of forbidden values as $G_i$. In other words, at step $i+1$, $w(x_{i+1})$ is chosen as the smallest positive integer not in $G_i$, ensuring that the invariant is preserved as proved below with an inductive argument. By this construction, all sets contained in $\mathcal{F}_{\leq i+1}$ continue to have distinct weights.

\medskip

\noindent\textbf{Base case:} Assign $w(x_1) = 1$. Observe that the set $\{x_1\}$, as well as the empty set $\phi$, have distinct weights. Hence, the invariant is satisfied for $i = 1$.
\medskip

\noindent\textbf{Inductive hypothesis:} Assume that weights $w(x_1), \dots, w(x_i)$ have been assigned such that the invariant is satisfied for the first $i$ elements; that is, all sets $A \in \mathcal{F}_{\leq i}$ have distinct total weights.
 \medskip

\noindent\textbf{Inductive step:} Consider $x_{i+1}$. We define the set of forbidden values
\[
G_i = \Big\{ w(A) \;:\; A \in \mathcal{F}_{\leq i} \Big\} \;\cup\; \Big\{ |w(A) - w(B)| \;:\; A, B \in \mathcal{F}_{\leq i}, A \neq B \Big\}.
\]
By construction, the number of forbidden values at step $i$ satisfies $|G_i| = |\mathcal{F}_{\leq i}| + \binom{|\mathcal{F}_{\leq i}|}{2} < |{\mathcal{F}}_{\leq i}|^2$. This ensures that there exists at least one available integer in the range $\{1, \dots, |\mathcal{F}_{\leq i}|^2\}$ for assigning $w(x_{i+1})$.
Choose $w(x_{i+1})$ as the smallest positive integer not in $|\mathcal{F}_i|^2$.  
\medskip

\noindent\textbf{Correctness of invariant at current step:} Suppose the invariant is not true, implying the existence of two distinct sets $A, B \in \mathcal{F}_{\leq i+1}$ with $w(A) = w(B)$ after assigning $w(x_{i+1})$. We consider the following three exhaustive cases. 

\begin{itemize}
    \item \textbf{Case 1:} \textbf{(}Both $A$ and $B$ contain $x_{i+1}.$\textbf{)} Let $A' = A \setminus \{x_{i+1}\}$ and $B' = B\setminus \{x_{i+1}\}$.  Note that since $A, B \in \mathcal{F}_{\leq i+1}$, then $A',B' \in \mathcal{F}_{\leq i}$. But $w(A)=w(B)\implies w(A') = w(B')$,  which contradicts the inductive hypothesis that two distinct sets in $\mathcal{F}_{\leq i}$ have identical weights.

\medskip

    \item \textbf{Case 2:} (Exactly one of $A, B$ contains $x_{i+1}$.) Without loss of generality, let A contain $x_{i+1}$ and $A' = A \setminus \{x_{i+1}\}$ and $B \subseteq \{x_1, \dots, x_i\}$. Again note that $A',B \in \mathcal{F}_{\leq i}$. And, since $w(x_{i+1})$ is a positive integer and $w(A)=w(B)$, it must be that $w(A')<w(B)$. And, $w(A') + w(x_{i+1}) = w(B) \implies w(x_{i+1}) = w(B) - w(A')=|w(B)-w(A')| \in G_i$, a contradiction to the fact that $x_{i+1}$ is assigned a positive forbidden weight.
\medskip

    \item \textbf{Case 3:} (Neither $A$ nor $B$ contains $x_{i+1}$.) In this case, both sets are contained in $\mathcal{F}_{\leq i}$ , so by the inductive hypothesis, their weights must be distinct, a contradiction.
\end{itemize}

Hence, all sets in $\mathcal{F}_{\leq i+1}$ have distinct weights where $x_{i+1}$ is assigned a weight in $[1,|\mathcal{F}_{\leq i}|^2]$. Since $\mathcal{F}_n\subseteq \mathcal{F}_{\leq n}$ and contains all sets in $\mathcal{F}$, by induction all these sets get distinct weights. And, the total number of sets in $\mathcal{F}_{\leq n}$ can be upper-bounded by $|\mathcal{F}|\times \gamma$ as any set $S$ in $\mathcal{F}$ has at most $|S|$ many projected sets (including itself) in  $\mathcal{F}_{\leq n}$. These facts together imply the stated Theorem~\ref{lem:deterministic-separation1}.
\end{proof}

From the proof above, it is also clear that one may choose to use the bound $[1, \dots, |\mathcal{F}_{\leq n}|^2]$ instead of $[1, \dots, \gamma^2|\mathcal{F}|^2]$ whenever convenient. We express this as the following theorem, which can be applied when $|\mathcal{F}_{\leq n}|^2$ is significantly smaller than $\gamma^2 |\mathcal{F}|^2$. Our next example follows as an immediate observation of this fact.

\begin{theorem}[An Alternate Deterministic Separation Lemma]
Let $(X, \mathcal{F})$ be a finite set system, where $\mathcal{F} \subseteq 2^X$ is a family of subsets of $X$.  
For an arbitrary ordering of the elements $x_1< x_2< \dots< x_n$ of $X$, let $\mathcal{F}_{\leq n}$ denote the collection of all projections of sets in $\mathcal{F}$ onto their first $i$ elements, for $i \in[ 1, n]$. Formally,
$
\mathcal{F}_{\leq n} = \{\, S \cap \{x_1, \dots, x_i\} \mid S \in \mathcal{F},\, 1 \le i \le n \,\}.$
\medskip

\noindent Then there exists a deterministic assignment of positive integer weights 
\[
w: X \to \{1, 2, \dots, |\mathcal{F}_{\leq n}|^2\}
\]
such that for all distinct $S_1, S_2 \in \mathcal{F}$,
\[
w(S_1) = \sum_{x \in S_1} w(x) \neq \sum_{x \in S_2} w(x) = w(S_2),
\]
\end{theorem}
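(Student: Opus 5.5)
The plan is to rerun the greedy construction from the proof of Theorem~\ref{lem:deterministic-separation1} essentially verbatim, but to stop before its last step, where $|\mathcal{F}_{\leq n}|$ was crudely bounded by $\gamma|\mathcal{F}|$. Fix the given ordering $x_1<\dots<x_n$ and process the elements in this order. At step $i+1$, form the forbidden set
\[
G_i=\{w(A): A\in\mathcal{F}_{\leq i}\}\cup\{|w(A)-w(B)|: A,B\in\mathcal{F}_{\leq i},\, A\neq B\},
\]
and set $w(x_{i+1})$ to be the smallest positive integer not in $G_i$. Start with $w(x_1)=1$.

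The invariant is that distinct sets in $\mathcal{F}_{\leq i}$ have distinct weights. It is preserved by the same three cases as before.
\begin{itemize}
\item If both sets contain $x_{i+1}$, remove it from both and apply induction.
\item If neither set contains $x_{i+1}$, induction applies directly.
\item If exactly one contains $x_{i+1}$, say $A\ni x_{i+1}$ and $B\not\ni x_{i+1}$, then $w(A)=w(B)$ forces $w(x_{i+1})=w(B)-w(A\setminus\{x_{i+1}\})$. This value is a positive difference of weights of two sets in $\mathcal{F}_{\leq i}$, so it lies in $G_i$, a contradiction.
\end{itemize}
Two facts are needed for these reductions. First, $\mathcal{F}_{\leq i}$ is closed under removing $x_{i+1}$ from members of $\mathcal{F}_{\leq i+1}$, which holds because $(S\cap\{x_1,\dots,x_{i+1}\})\setminus\{x_{i+1}\}=S\cap\{x_1,\dots,x_i\}$. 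Second, the two sets produced in the first case are still distinct. Since $\mathcal{F}\subseteq\mathcal{F}_{\leq n}$ (take $i=n$), distinct members of $\mathcal{F}$ end up with distinct weights.

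For the range bound, note that the families are nested: $\mathcal{F}_{\leq i}\subseteq\mathcal{F}_{\leq n}$ for every $i$. Hence
\[
|G_i|\le |\mathcal{F}_{\leq i}|+\binom{|\mathcal{F}_{\leq i}|}{2}\le \binom{|\mathcal{F}_{\leq n}|+1}{2}.
\]
The smallest positive integer avoiding $G_i$ is therefore at most $|G_i|+1\le |\mathcal{F}_{\leq n}|^2$, at least when $|\mathcal{F}_{\leq n}|\ge 2$. When $|\mathcal{F}_{\leq n}|\le 1$, the family has at most one set and any weights work.

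The only delicate point is this counting step. The bound must be stated as "at most $|G_i|+1$" rather than relying on the strict inequality claimed earlier in the paper. It must also account for the empty set, which lies in $\mathcal{F}_{\leq i}$ and contributes the weight $0$. Since $0$ is not a positive integer, it only helps: it does not block any candidate value. With this, every weight lies in $\{1,\dots,|\mathcal{F}_{\leq n}|^2\}$, which gives the statement.
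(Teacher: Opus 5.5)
Your proof is correct and is essentially the paper's own argument: rerun the greedy construction unchanged and bound each $|\mathcal{F}_{\le i}|$ by $N=|\mathcal{F}_{\le n}|$ rather than by $\gamma|\mathcal{F}|$. Your count $w(x_{i+1})\le |G_i|+1\le\binom{N+1}{2}+1\le N^2$ for $N\ge 2$ is more careful than the paper's per-step claim $w(x_{i+1})\le|\mathcal{F}_{\le i}|^2$, which fails when $|\mathcal{F}_{\le i}|=1$. One nit: for $N=0$ (that is, $\mathcal{F}=\emptyset\neq X$) the target range $\{1,\dots,0\}$ is empty, so ``any weights work'' does not apply there and the statement itself is degenerate.
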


\medskip

\section{Deterministic Separation Lemma on Contiguous Sets /Intervals}

\begin{definition}[Contiguous Sets/Intervals]
Let $X = \{x_1, x_2, \dots, x_n\}$ be a totally ordered set. A subset $I \subseteq X$ is called \emph{contiguous} if there exist indices $1 \le i \le j \le n$ such that $I = \{x_i, x_{i+1}, \dots, x_j\}$. Such a set $I$ is also referred to as an \emph{interval} of $X$.
\end{definition}

\begin{proposition}[Deterministic Separation Lemma for Intervals]\label{prop:interval-weight} Let $\mathcal{I} \subseteq 2^X$ denote the family of all contiguous sets/intervals of $X= \{x_1, x_2, \dots, x_n\}$. There exists a deterministic assignment of positive integer weights $w: X \to \{1, 2, \dots, |\mathcal{I}|^2(\leq n^4)\}$ such that for all distinct intervals $I_1, I_2 \in \mathcal{I}$, $w(I_1) \neq  w(I_2)$.
\end{proposition}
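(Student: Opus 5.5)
The plan is to apply the Alternate Deterministic Separation Lemma to the set system $(X,\mathcal{I})$ with the natural ordering $x_1<x_2<\dots<x_n$. The key observation is that, for intervals, the family of projections $\mathcal{I}_{\leq n}$ is essentially no larger than $\mathcal{I}$ itself. Then the weight range $\{1,\dots,|\mathcal{I}_{\leq n}|^2\}$ guaranteed by that theorem collapses to $\{1,\dots,|\mathcal{I}|^2\}$, instead of the cruder $\gamma^2|\mathcal{I}|^2\le n^2|\mathcal{I}|^2$ bound from Theorem~\ref{lem:deterministic-separation}.

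First I will compute the projections. Take an interval $I=\{x_a,\dots,x_b\}$ and a prefix $\{x_1,\dots,x_i\}$. Their intersection depends on $i$:
\begin{itemize}
\item it is $\emptyset$ if $i<a$;
\item it is $\{x_a,\dots,x_i\}$ if $a\le i\le b$;
\item it is $I$ itself if $i\ge b$.
\end{itemize}
In every case the projection is either empty or again a contiguous set. Hence $\mathcal{I}_{\leq n}\subseteq \mathcal{I}\cup\{\emptyset\}$, so $|\mathcal{I}_{\leq n}|\le |\mathcal{I}|+1$.

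The only remaining wrinkle is the extra empty set, which is the one point needing care. I would handle it by inspecting the proof of Theorem~\ref{lem:deterministic-separation}. The empty set has weight $0$, so it contributes nothing new to the forbidden set $G_i$. Its weight $0$ is already excluded because weights are positive, and the differences $|w(A)-0|=w(A)$ are already listed as set weights. So the count of genuinely forbidden positive values at step $i$ is at most $m+\binom{m}{2}$, where $m\le|\mathcal{I}|$ is the number of nonempty sets in $\mathcal{I}_{\leq i}$. This is below $m^2$ whenever $m\ge 2$, and the degenerate cases ($n=1$, or $m\le 1$ at the first step) are trivial since $w(x_1)=1$ works. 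Therefore each greedily chosen weight lies in $\{1,\dots,|\mathcal{I}|^2\}$.

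Finally, there are $|\mathcal{I}|=\binom{n}{2}+n=n(n+1)/2\le n^2$ intervals, so $|\mathcal{I}|^2\le n^4$. The distinctness of $w(I_1)$ and $w(I_2)$ for distinct intervals is exactly the invariant maintained in the greedy construction, which gives the proposition.
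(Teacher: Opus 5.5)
Your proposal is correct and follows the paper's route. Projections of intervals onto prefixes are again intervals (or empty), so the greedy construction from the Deterministic Separation Lemma only ever forbids values coming from at most $|\mathcal{I}|$ sets, which keeps every weight in $\{1,\dots,|\mathcal{I}|^2\}$. You handle the empty projection explicitly, which the paper glosses over by calling every projection an interval; one small slip is that the case $m=1$ arises when assigning $x_2$, not $x_1$, though the greedy choice $w(x_2)=2\le|\mathcal{I}|^2$ is still fine for $n\ge 2$.
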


\begin{proof}

Since $\mathcal{I}=\{[x_i,x_j]:1\le i\le j\le n\}$, we have $|\mathcal{I}|=n(n+1)/2$. Any projection of an interval onto the first $i$ elements is again an interval, so $|\mathcal{I}_{\le i}|\le |\mathcal{I}|\le n^{2}$. Consequently, in the proof of the \emph{Deterministic Separation Lemma}, instead of selecting weights for $x_1,x_2,\dots,x_n$ from the range $[1,\gamma^{2}|\mathcal{I}|^{2}]$, one may choose them from $[1,|\mathcal{I}_{\le n}|^{2}(\leq n^4)]$. The inductive argument in the correctness proof of DSL shows that this smaller range still suffices to assign distinct total weights to all contiguous sets/intervals.

\end{proof}


\section{A Few Immediate Implications.}
\subsection{W-Hardness of \ALP\ Parameterized by Distance to a Path}

 In the context of the \ALP  problem \cite{DBLP:conf/mfcs/BandopadhyayBMS24}, the authors used (Randomized) \emph{Separation Lemma} to randomly assign weights to elements/vertices of the ground set to ensure that each pairs of disjoint intervals together had a unique total weight. This allowed them to prove the following conditional hardness result:

\begin{proposition}
Unless {\rETH} fails, there is no randomized algorithm for {\ALP} which runs in $f({\sf dtp}(G) + |A|) \cdot n^{o({\sf dtp}(G) + |A|)}$-time and is correct with probability at least $2/3$ where ${\sf dtp}(G)$ is distance to a path graph.
\end{proposition}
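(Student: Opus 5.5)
The statement is the conditional lower bound of Bandopadhyay et al.~\cite{DBLP:conf/mfcs/BandopadhyayBMS24}, so the plan is to reproduce their reduction and to isolate the single place where the randomized Separation Lemma (Proposition~\ref{prop:random-separation}) enters. We start from a problem with a known \rETH-based lower bound of the form $f(k)\cdot n^{o(k)}$; a multicolored or grid-type clique-style problem is the natural choice. From an instance with parameter $k$ we build an \ALP instance in which $G$ minus a modulator of size $O(k)$ is a single path. The path is cut into consecutive blocks, one block per candidate choice, and the terminals in $A$ attached to the modulator force each chosen candidate to correspond to a subpath, i.e.\ an interval. Path lengths (the parameter $\ell$) are tuned so that a path of exact length $\ell$ through the modulator exists if and only if the two intervals it uses encode a compatible pair.

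\textbf{Where separation is used.} The reduction needs weights on the path vertices, realised by subdividing edges or attaching pendant gadgets of prescribed length. With these weights, the total weight of a pair of disjoint intervals determines the pair. Then a single length constraint $\ell$ accepts exactly the intended pair, and a length coincidence cannot let a spurious pair of intervals be accepted. The number of such pairs is $O(n^4)$. Assigning weights uniformly from a range polynomial in $n$ makes all pair weights distinct with probability at least $1-\binom{|\mathcal{F}|}{2}/M$. This is the only source of randomness, and it is why the algorithmic statement is phrased for randomized algorithms correct with probability $2/3$.

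\textbf{Completing the argument.} We then check two properties. First, $|V(G)|$ stays polynomial because all weights are polynomially bounded. Second, ${\sf dtp}(G)+|A|=O(k)$. Given these, an $f({\sf dtp}(G)+|A|)\cdot n^{o({\sf dtp}(G)+|A|)}$ randomized algorithm for \ALP, composed with the randomized reduction, gives a randomized $f'(k)\cdot n^{o(k)}$ algorithm for the source problem, contradicting \rETH. The two error probabilities (failure of separation, failure of the algorithm) are made to sum below $1/3$ by boosting, either by enlarging $M$ or by repetition.

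\textbf{Main obstacle.} I expect the hardest step to be the gadget correctness: showing that every solution packing must use exactly one interval per block, so that separation of interval pairs is enough to force a correct decoding. Here I would first lean on the exact-length constraint together with distinctness of pair weights, and argue by contradiction that any non-canonical path would realise a forbidden weight. For the follow-up, note that replacing the random weights by those of Proposition~\ref{prop:interval-weight}, or by Theorem~\ref{lem:deterministic-separation} applied to the family of disjoint interval pairs, keeps them polynomial ($O(n^{8})$ or similar). That gives a deterministic reduction, and hence the same bound for deterministic algorithms under ETH.
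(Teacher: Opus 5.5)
The paper gives no proof of this proposition. It is quoted from Bandopadhyay et al.~\cite{DBLP:conf/mfcs/BandopadhyayBMS24}, and the paper adds only one remark: the randomness of their reduction lies in weighting the path vertices so that every union of two disjoint intervals has a distinct weight. Your account of that role agrees with the paper: $O(n^4)$ sets, failure probability at most $\binom{|\mathcal{F}|}{2}/M$, polynomial $M$ so that weights realised as path lengths keep $|V(G)|$ polynomial, and then composition with the assumed algorithm.

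As a proof, however, the proposal has a genuine gap: the reduction is never given. The source problem is left open, and the gadgets are described only by what they should achieve. The forcing argument (every solution uses one interval per block, and the intervals on each path are the intended ones) is exactly the step you defer as the ``main obstacle''.

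This forcing needs a specific design. All paths have the same length $\ell$. If the non-interval part of each path had fixed length, distinct pair weights would pin every path to one predetermined pair, and no choice could be encoded. Something must vary with the choice and complement the interval weights, as the items $e_t$ of weight $M-w(t)$ do in the paper's Bin Filling reduction. Your sketch does not say what plays that role.

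Likewise, ${\sf dtp}(G)+|A|=O(k)$ is asserted, not derived, and it is not a formality. Suppose a clique-style construction verifies each of the $\binom{k}{2}$ pairs of colour classes with its own terminal pair. Then $p=\Theta(k^2)$, and you only get an $f(p)\,n^{o(\sqrt{p})}$ lower bound, not the stated $n^{o(p)}$. So either use the reduction of~\cite{DBLP:conf/mfcs/BandopadhyayBMS24} as a black box, as the paper does, or supply a construction whose parameter is linear in $k$.

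Two smaller points. First, enlarging $M$ cannot push the total error below $1/3$ when the \ALP algorithm alone may already err with probability $1/3$. You must amplify that algorithm by majority vote, or use the fact that \rETH is insensitive to the error constant as long as it stays bounded away from $1/2$.

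Second, in your closing remark, the weights of Proposition~\ref{prop:interval-weight} do not suffice. Separating single intervals does not separate unions of two intervals. For example, weights $1,2,5,4$ on $x_1,\dots,x_4$ give all ten intervals distinct weights, yet $w(\{x_1,x_3\})=w(\{x_2,x_4\})=6$. Only your second option, the Deterministic Separation Lemma applied to $\mathcal{I}^{\times 2}$, is what the paper uses.
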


We introduce the family \(\mathcal{I}^{\times 2}=\{\,I_i\cup I_j : I_i, I_j \in \mathcal{I}\,\}\), representing all pairwise unions of disjoint intervals. 
Because \(|\mathcal{I}| \le n^2\), it follows that \(|\mathcal{I}^{\times 2}| \le n^4\). 
Applying the \emph{Deterministic Separation Lemma}, we can assign to each element \(x_i\) a weight chosen \emph{deterministically} from the range \([1,n^{10}(=\gamma^2||\mathcal{I}^{\times 2}|^2)]\) so that the total weight of every set in \(\mathcal{I}^{\times 2}\) is unique. 
This replaces the randomized weight assignment used in~\cite{DBLP:conf/mfcs/BandopadhyayBMS24} with a fully deterministic construction, allowing their reduction to go through without any probabilistic assumption and thereby yielding a strengthened, completely deterministic hardness result.


\begin{theorem}Unless {\sf ETH} fails, there is no algorithm for {\ALP} running in $f({\sf dtp}(G) + |A|) \cdot n^{o({\sf dtp}(G) + |A|)}$-time.
\end{theorem}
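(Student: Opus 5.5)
The plan is to re-run the reduction of Bandopadhyay et al.~\cite{DBLP:conf/mfcs/BandopadhyayBMS24} verbatim, except that the random weights are replaced by the deterministic weights from the \emph{Deterministic Separation Lemma}. Their randomized lower bound comes from a parameterized reduction, from a problem that is W[1]-hard and has an ETH lower bound of the form $f(k)\,n^{o(k)}$ (for instance {\sc Multicolored Clique}), to \ALP. In that reduction the host graph is a path together with $O(k)$ extra vertices, and $|A| = O(k)$. The only randomized ingredient is the assignment of weights, encoded as path lengths or segment lengths along the path, to the path vertices. These weights must make every union of two disjoint intervals of the path have a distinct total weight. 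That property is what forces each packed path through the modulator to ``read off'' a unique pair of intervals, and hence a unique choice of vertices or edges in the source instance.

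\textbf{Step 1: weights.} Instantiate the family $\mathcal{I}^{\times 2}$ on the $n$ path vertices. We have $|\mathcal{I}^{\times 2}| \le n^4$ and $\gamma \le n$. Theorem~\ref{lem:deterministic-separation} then yields, in polynomial time, weights in $[1, n^{10}]$ that separate all sets in $\mathcal{F}$. The greedy procedure maintains $G_i$ of size polynomial in $n$ at each of the $n$ steps, so the construction is polynomial and deterministic. The bound $n^{10}$ is polynomial, just like the random range $2|\mathcal{F}|^2 = O(n^8)$ used originally. This is the point that must be checked carefully. In the reduction, weights are realized as subdivided path lengths, so the produced instance has size $n^{O(1)}$ with the exponent only a constant larger. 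Crucially, the parameter ${\sf dtp}(G)+|A|$ is unchanged, since subdividing path segments keeps the modulator the same.

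\textbf{Step 2: correctness.} The original correctness argument has the form ``if all sums over $\mathcal{I}^{\times 2}$ are distinct then the instances are equivalent''. It uses the randomness only through that event, so it transfers verbatim. We therefore obtain a deterministic polynomial-time reduction mapping an instance with parameter $k$ and size $N$ to an \ALP instance with ${\sf dtp}(G)+|A| = O(k)$ and size $N^{O(1)}$.

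\textbf{Step 3: conclusion.} Suppose there were an $f({\sf dtp}(G)+|A|)\cdot n^{o({\sf dtp}(G)+|A|)}$ algorithm for \ALP. Composing it with the reduction gives an $f'(k)\,N^{O(1)\cdot o(k)} = f'(k)\,N^{o(k)}$ deterministic algorithm for the source problem, contradicting ETH. No probabilistic amplification is needed, so \rETH is replaced by ETH.

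The main obstacle is Step 1's accounting. We must verify that the original reduction needs separation only for $\mathcal{I}^{\times 2}$ and not for some larger family, such as unions of more intervals or intervals with offsets. We must also verify that the polynomially larger weight range does not inflate any parameter. The first thing I would do is reread the gadget analysis and list exactly which weight sums it compares. If a larger family appears, such as $t$-fold unions with $t$ constant, the same lemma still gives polynomial weights $n^{O(t)}$, and the argument goes through unchanged.
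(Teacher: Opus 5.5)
Your proposal is correct and follows the same route as the paper. The paper applies the Deterministic Separation Lemma to the family $\mathcal{I}^{\times 2}$ of unions of two disjoint intervals, with $|\mathcal{I}^{\times 2}|\le n^4$ and $\gamma\le n$, to get weights in $[1,n^{10}]$; it substitutes these for the random weights in the reduction of Bandopadhyay et al., which turns the \rETH{} bound into an ETH bound. Your explicit checks that the greedy construction runs in polynomial time and that ${\sf dtp}(G)+|A|$ stays $O(k)$ are details the paper leaves implicit.
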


We bring this to the attention of the reader that choosing weights from a smaller range $[1, n^8]$ already suffices. This follows because, instead of using the general bound $\gamma^2|\mathcal{F}|^2$, we can rely on the fact that $|\mathcal{I}^{\times 2}| < n^4$, where $\mathcal{I}^{\times 2}$ denotes the family of pairwise unions of disjoint intervals. It is immediate that the total number of projected sets of $\mathcal{I}^{\times 2}$, denoted by $\mathcal{I}^{\times 2}_{\leq n}$, is equal to $|\mathcal{I}^{\times 2}|$ itself. Consequently, the upper bound on the weight range can be improved to $n^8$.

\subsection{NP-hardness of {\sc Bin Filling} Problem.}

\noindent
\begin{definition}[{\sc Bin Filling} Problem]
Given a multiset of items with positive integral weights (encoded in unary), a bin capacity 
$M \in \mathbb{Z}_{>0}$, and an integer $n$, decide whether it is possible to select a submultiset 
of the items and partition it into exactly $n$ bins such that the total weight of each bin is 
exactly $M$. Items not placed into any bin may be discarded.
\end{definition}

\medskip
We establish strong NP-hardness of this problem via a reduction from the classical NP-complete problem
\emph{3-Dimensional Perfect Matching (3DPM)}, as provided by Karp~\cite{karp1972reducibility}.
 In 3DPM, the input consists of three disjoint sets 
$X,Y,Z$, each of size $n$, and a collection of triples 
$T \subseteq X \times Y \times Z$. The task is to decide whether there exists a perfect matching 
$M \subseteq T$ of size $n$ that covers every element of $X \cup Y \cup Z$ exactly once.

\medskip

\noindent\textbf{Construction of the Reduction to the {\sc Bin Filling} Problem.}  
Let $\mathcal{F}$ denote the family of all subsets of $X \cup Y \cup Z$ of size at most $3$.  
Clearly, $|\mathcal{F}| \le (3n)^3 + (3n)^2 + 3n.$
By the Deterministic \emph{Separation Lemma}, we can assign to each element of 
$X \cup Y \cup Z$ a distinct positive integral weight from the range 
$[1,n^2\cdot \big((3n)^3 + (3n)^2 + 3n\big)^2]$
such that every set in $\mathcal{F}$ has a unique total weight. For each triple $t \in T$, we introduce a \emph{large item} $e_t$ of weight $w(e_t) := M - w(t)$,
where we set $M := n^{10}$. We also introduce $n$ bins, each of capacity $M$. The original elements from $X \cup Y \cup Z$ are called \emph{small items}, 
while the newly introduced items $e_t$s are referred to as \emph{large items}.


\begin{lemma}
Every feasible packing that fills all $n$ bins contains exactly one large item in each bin.
\end{lemma}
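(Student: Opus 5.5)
The plan is a counting argument on weights: small items are tiny compared with $M$, while every large item is close to $M$. I would first bound the weights. By the Deterministic Separation Lemma, each small item has weight at most $W:=n^2\big((3n)^3+(3n)^2+3n\big)^2$, which is at most $c\,n^8$ for a constant $c$ (for instance $c=4\cdot 27^2$, using $(3n)^3+(3n)^2+3n\le 2\cdot 27 n^3$). For any triple $t$, $w(t)\le 3W$. The total weight of \emph{all} small items together is at most $3n\cdot W\le 3c\,n^9$. Consequently every large item satisfies
\[
M-3W\le w(e_t)=M-w(t)<M.
\]
In particular $w(e_t)>M/2$ once $n^{10}>6W$, i.e.\ for all $n$ beyond a fixed constant. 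Smaller instances can be solved by brute force, or handled by enlarging $M$ by a constant factor; I would state this caveat explicitly.

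Next I would show that no bin holds two or more large items. Any two large items have combined weight at least $2(M-3W)>M$, which exceeds the capacity. So each bin contains at most one large item.

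Then I would show that every bin holds at least one large item. Suppose some filled bin contains no large item. Its weight is then at most the total weight of all small items, $3nW\le 3c\,n^9<n^{10}=M$ for large $n$. This contradicts the requirement that the bin weighs exactly $M$. Combining this with the previous step, each of the $n$ bins contains exactly one large item.

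The main obstacle is purely quantitative. The bound $\gamma^2|\mathcal F|^2$ gives weights of order $n^8$, so the margins $n^{10}\gg 3n\cdot n^8$ and $M>6W$ hold only asymptotically, with a concrete constant ($c\approx 3000$) hidden in $W$. I would make the threshold on $n$ explicit, or observe that one can instead take $M$ to be a large enough polynomial such as $n^{12}$ without affecting the rest of the reduction.
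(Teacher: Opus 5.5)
Your proposal is correct and matches the paper's proof: the total weight of all $3n$ small items is $\mathcal{O}(n^9)<n^{10}=M$, which rules out a bin with no large item, and every large item weighs more than $M/2$ (the paper says more than $0.9M$), which rules out two large items in one bin. You also handle the constant hidden in $W$ explicitly, noting that both inequalities hold only for $n$ beyond a fixed threshold and that brute force on small instances or a constant-factor increase of $M$ fixes this; the paper skips this with ``asymptotically $\mathcal{O}(n^9)$'', and note that $M=n^{12}$ on its own still requires $n^3>3c$.
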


\begin{proof}
Suppose there exists a bin that does not contain any large item. Then it can contain only small items. 
The total weight of all small items together is at most $
3n \times n^2\cdot \big( (3n)^3 + (3n)^2 + 3n \big)^2$,
which is asymptotically $\mathcal{O}(n^9)$. 
Since the bin capacity is $M = n^{10}$, any single bin without a large item cannot be completely filled. Next, consider the possibility that a bin contains two or more large items. 
Each large item $e_t$ has weight
$w(e_t) = M - \mathrm{Weight}(t) \;\;\ge\;\; n^{10} - 3\times n^2\big( (3n)^3 + (3n)^2 + 3n \big)^2>.9n^{10}>.9\times M.$
Therefore, the combined weight of any two large items is greater than $M$ exceeding the bin capacity, a contradiction.
\end{proof}



\begin{lemma}
Every bin in a feasible packing contains exactly three small items.
\end{lemma}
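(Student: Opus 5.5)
The plan is to reduce the claim to the separation property of the weights, handle bins with few small items directly, and finish with a counting argument.

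\textbf{Reduction to a weight identity.} By the preceding lemma, every bin of a feasible packing contains exactly one large item, say $e_t$ for some triple $t \in T$. Let $S$ be the set of small items in that bin. Since the bin is filled exactly to capacity $M$,
\[
w(e_t) + w(S) = M, \qquad w(e_t) = M - w(t),
\]
and hence $w(S) = w(t)$. Each small item is a distinct element of $X \cup Y \cup Z$ and appears only once in the input multiset. So $S$ is a genuine subset of $X \cup Y \cup Z$, and the small items in different bins are pairwise disjoint.

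\textbf{Each bin has at least three small items.} Suppose some bin has $|S| \le 2$. Then both $S$ and $t$ have size at most $3$, so both lie in $\mathcal{F}$. Moreover $S \neq t$, because $|t| = 3$: its three elements come from the disjoint sets $X,Y,Z$. The weights were chosen by the Deterministic Separation Lemma (Theorem~\ref{lem:deterministic-separation}) so that distinct sets of $\mathcal{F}$ have distinct total weights. This gives $w(S) \neq w(t)$, contradicting the weight identity above. The case $S = \emptyset$ is covered too, since $w(\emptyset) = 0 < w(t)$.

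\textbf{Counting gives exactly three.} There are only $3n$ small items in total, and they are distributed disjointly among the $n$ bins. Since each bin holds at least three of them, pigeonhole forces each bin to contain exactly three small items (and in fact all small items are used).

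The main obstacle is that the separation guarantee only covers sets of size at most $3$. We therefore cannot directly exclude a bin whose small items form a set of size $4$ or more with weight equal to $w(t)$. The plan avoids this by ruling out only the small cases via separation and letting the global count of $3n$ small items exclude the large ones. Once $|S| = 3$ is established, $S \in \mathcal{F}$ together with $w(S) = w(t)$ also yields $S = t$, which is what the subsequent correctness argument will need.
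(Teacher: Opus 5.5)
Your proof is correct and follows the paper's argument. A bin with at most two small items yields two distinct sets of $\mathcal{F}$ of equal weight (its small items and the triple $t$ of its large item), which the Deterministic Separation Lemma forbids; the count of $3n$ small items over $n$ bins then rules out bins with four or more. The only difference is ordering: you first show every bin has at least three small items and then count, whereas the paper starts from a bin with at least four and uses pigeonhole to find one with at most two. Your ordering also makes explicit the case where discarded items leave some bin short, which the paper covers only implicitly.
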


\begin{proof}
Suppose some bin contained at least four small items. Since there are $3n$ small items and $n$ bins, by the pigeonhole principle some other bin must then contain at most two small items. Consider such a bin $B$. It contains exactly one large item $e_t$ of weight $M-w(t)$ together with at most two small items of combined weight $w(t)$. This would imply that the weight of a set of size three equals the weight of a set of size at most two. However, both sets belong to the family $\mathcal{F}$, and by the deterministic \emph{Separation Lemma}, distinct sets have distinct weights. This yields a contradiction.
\end{proof}



\begin{corollary}
Each bin contains exactly one large item and exactly three small items.
\end{corollary}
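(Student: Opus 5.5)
The corollary combines the two preceding lemmas, so the plan is to invoke them in order and then check that the counts are consistent.

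First, by the first lemma, every feasible packing that fills all $n$ bins places exactly one large item in each bin. This gives the large-item part of the statement immediately.

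Second, by the second lemma, no bin contains four or more small items. Its pigeonhole step rules out any bin with at most two small items whenever some other bin has more than three. To make the argument airtight, I would redo that step directly, since the lemma as stated only starts from the existence of a bin with at least four small items. Suppose some bin $B$ contains $k \ne 3$ small items, together with its unique large item $e_t$. Then the small items in $B$ form a set $S$ with $w(S) = M - w(e_t) = w(t)$.
\begin{itemize}
\item If $k \le 2$, then $S$ and $t$ are distinct members of $\mathcal{F}$, since they have different sizes. Equal weights contradict the Deterministic Separation Lemma (Theorem~\ref{lem:deterministic-separation}).
\item If $k \ge 4$, then by pigeonhole some other bin holds at most two small items. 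This is because at most $3n$ small items exist, and each small item lies in at most one bin. That other bin then falls into the previous case, again a contradiction.
\end{itemize}

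Hence every bin contains exactly three small items. Combined with the first step, each bin has exactly one large item and exactly three small items.

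The main subtlety I anticipate is the pigeonhole count. The total number of small items used may be less than $3n$, but the inequality still works. If some bin has at least $4$ small items, then the remaining $n-1$ bins share at most $3n-4$ small items. Since $3(n-1) = 3n-3 > 3n-4$, some other bin must hold at most $2$. This is exactly the counting the second lemma relies on, so no further ingredients are needed.
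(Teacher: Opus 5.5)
Your proposal is correct and follows the paper's route. The corollary is simply the conjunction of the two preceding lemmas, and your re-derivation of the small-item count uses the same pigeonhole-plus-separation argument as the paper's second lemma. Your explicit split into the cases $k\le 2$ and $k\ge 4$ is a useful tightening: the paper's proof formally starts only from a bin with at least four small items, but because items may be discarded, a bin with at most two small items can occur without any bin having four or more, and your $k\le 2$ case rules this out directly.
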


\begin{lemma}
There exists a perfect 3D matching if and only if the constructed Bin Filling instance is feasible.
\end{lemma}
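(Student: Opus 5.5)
We prove the two directions separately. Both rely on the preceding Corollary (each bin holds exactly one large item and exactly three small items) and on the Deterministic Separation Lemma applied to the family $\mathcal{F}$ of all subsets of $X\cup Y\cup Z$ of size at most $3$.

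\emph{Forward direction.} Let $M^\ast\subseteq T$ be a perfect 3D matching, so $|M^\ast|=n$. Enumerate its triples as $t_1,\dots,t_n$. Into bin $j$ we place the large item $e_{t_j}$ together with the three small items forming $t_j$. The load of bin $j$ is then $(M-w(t_j))+w(t_j)=M$. Since $M^\ast$ covers every element of $X\cup Y\cup Z$ exactly once, each small item is used exactly once. Each large item is used at most once, because the triples $t_j$ are distinct. All other large items are discarded, which the problem allows. Hence the instance is feasible.

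\emph{Backward direction.} Suppose we have a feasible packing. By the Corollary, each bin $j$ contains exactly one large item $e_{t_j}$ and a set $S_j$ of exactly three small items. Since there are $3n$ small items and $n$ bins, the sets $S_1,\dots,S_n$ partition $X\cup Y\cup Z$. The bin equation $M-w(t_j)+w(S_j)=M$ gives $w(S_j)=w(t_j)$. Both $S_j$ and $t_j$ are $3$-element sets, so both belong to $\mathcal{F}$. The Deterministic Separation Lemma assigns distinct weights to distinct members of $\mathcal{F}$, so $S_j=t_j$ as sets. Therefore each $S_j$ is a triple of $T$, the triples $t_1,\dots,t_n$ are pairwise disjoint, and together they cover all $3n$ elements. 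They form a perfect 3D matching.

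\emph{Main obstacle.} The delicate step is identifying the set of small items in each bin with the triple encoded by its large item. This requires the weights to separate all $3$-sets, not only the triples in $T$. The family $\mathcal{F}$ was chosen with exactly this in mind: every $3$-subset of $X\cup Y\cup Z$ is a member, so an arbitrary set of three small items cannot share its weight with a triple.

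One more point must be checked. $\mathcal{F}$ consists of subsets, so the argument needs the small items to be distinct elements of $X\cup Y\cup Z$ rather than repeated copies. This holds because each element contributes exactly one small item, so a bin's small items form a genuine $3$-subset.

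Finally, the reduction must be polynomial in unary encoding. All weights are bounded by $M=n^{10}$, so this holds, which yields strong NP-hardness.
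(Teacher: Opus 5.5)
Your proof is correct and follows the paper's argument. The forward direction packs each matched triple with its large item. The backward direction combines the Corollary with the Deterministic Separation Lemma on all subsets of size at most $3$, so that each bin's three small items must equal the triple $t$ encoded by its large item $e_t$; your explicit checks that the $S_j$ partition $X\cup Y\cup Z$ and form genuine $3$-subsets are details the paper leaves implicit.
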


\begin{proof}
$(\Rightarrow)$ Suppose there exists a perfect 3D matching $M^* \subseteq T$. For each triple $t=\{x,y,z\} \in M^*$, place the three small items $w(x),w(y),w(z)$ together with the large item $e_t$ into a bin. Their total weight is $w(x)+w(y)+w(z)+w(e_t) = \mathrm{Weight}(t) + (M-\mathrm{Weight}(t)) = M$. Since $M^*$ has $n$ triples covering all $3n$ elements exactly once, all bins are filled (to capacity $M$) exactly.

$(\Leftarrow)$ Suppose the Bin Filling instance is feasible. By the previous lemmas, each bin contains exactly one large item and exactly three small items. Say, one bin contains a large item $e_t$. To reach capacity $M$, the three small items it contains must have weight $M-w(e_t)$. And there is precisely on triple (the triple being $t$) in $\mathcal{F}$ whose weight is $M-w(e_t)$ due to Deterministic \emph{Separation Lemma}. Thus each bin identifies a unique triple from $T$, and together these triples cover all $3n$ elements exactly once, which gives a perfect matching.
\end{proof}

\begin{theorem}
The {\sc Bin Filling} problem is strongly {\sf NP-hard}.
\end{theorem}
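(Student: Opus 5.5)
The plan is to combine the reduction already built in this subsection with the fact that 3DPM is NP-complete (Karp~\cite{karp1972reducibility}). Strong NP-hardness requires two things. First, the reduction must run in polynomial time. Second, every number it produces must be bounded by a polynomial in the input size, so that the unary encoding of the items has polynomial length. Correctness is already settled by the preceding lemma, which gives: a perfect 3D matching exists if and only if the constructed {\sc Bin Filling} instance is feasible.

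\textbf{Polynomial-time computability of the weights.} The family $\mathcal{F}$ consists of all subsets of $X\cup Y\cup Z$ of size at most $3$, so $|\mathcal{F}|=O(n^3)$. The greedy procedure in the proof of Theorem~\ref{lem:deterministic-separation1} is run on $\mathcal{F}$.
\begin{itemize}
\item At each of the $3n$ steps it lists the projected family $\mathcal{F}_{\le i}$, which has at most $\gamma|\mathcal{F}|=O(n^3)$ sets since $\gamma=3$.
\item It then forms the forbidden set $G_i$ of at most $O(n^6)$ values.
\item It picks the smallest positive integer not in $G_i$.
\end{itemize}
Every step is polynomial, so the whole weight assignment is computed in polynomial time.

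\textbf{Polynomial bounds on the numbers.}
\begin{itemize}
\item Each small item has weight at most $\gamma^2|\mathcal{F}|^2=O(n^6)$.
\item The capacity is $M=n^{10}$.
\item Each large item has weight $M-w(t)$, which lies between $1$ and $n^{10}$. It is positive because $w(t)=O(n^6)<n^{10}$ for large enough $n$; small $n$ can be handled by brute force or by padding.
\item There are $3n$ small items and $|T|\le n^3$ large items.
\end{itemize}
So the unary encoding of the whole instance has polynomial size.

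\textbf{Main obstacle.} The delicate point is the constants in the preceding lemmas. The bound $w(e_t)>0.9M$ and the bound ``total small weight $<M$'' hold only for sufficiently large $n$. I would fix a threshold $n_0$ beyond which both hold, and solve instances with $n<n_0$ directly in constant time. A second point is that the size-$\le 3$ separation argument must also rule out a bin whose three small items form a set that is not a triple of $T$. This is handled by uniqueness of weights in $\mathcal{F}$: the only set in $\mathcal{F}$ of weight $w(t)$ is $t$ itself.

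With these checks in place, the reduction is a polynomial-time, polynomially-bounded reduction from the NP-complete problem 3DPM. Hence {\sc Bin Filling} is strongly NP-hard, and no randomness is used anywhere in the reduction.
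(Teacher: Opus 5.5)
Your proposal is correct and follows the paper's route. It uses the same reduction from 3DPM: Deterministic Separation Lemma weights on all subsets of size at most $3$, large items $e_t$ of weight $M-w(t)$, and correctness via the same one-large-item, three-small-items and equivalence lemmas. What you add is bookkeeping the paper leaves implicit, and it is sound: polynomial-time computability of the greedy weights, polynomial bounds on all numbers (your $\gamma=3$ bound of $O(n^6)$ is tighter than the paper's $n^2|\mathcal{F}|^2$ range), and a threshold $n_0$ for the asymptotic inequalities.
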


We remark that the same reduction strategy could also have been applied to design the hardness of the {\sc Triangle Partitioning} problem, further underscoring the versatility of the \emph{Deterministic Separation Lemma}.

\subsection{Unique Length Paths in Trees.}

We show another application of \emph{Deterministic Separation Lemma} which can guarantee a polynomial bounded assignment of edge weights to any tree, such that the total weight of every simple path is distinct.

\begin{theorem}\label{thm:tree-path-unique}
Let $T=(V,E)$ be a tree on $n$ vertices. There is a deterministic assignment of positive integer
weights $w:E\to\{1,2,\dots,\,O(n^4)\}$ such that for any two distinct simple paths $P_1,P_2$ in $T$,
\[
w(P_1)=\sum_{e\in P_1} w(e)\neq \sum_{e\in P_2} w(e)=w(P_2).
\]Moreover, the weights can be constructed in polynomial time.
\end{theorem}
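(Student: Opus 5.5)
The plan is to apply the Deterministic Separation Lemma to the ground set $X=E$, with $\mathcal{F}$ the family of edge sets of simple paths in $T$. The bound then comes from the refined version, in the form used for intervals, rather than from the generic $\gamma^2|\mathcal{F}|^2$ bound.

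\textbf{Step 1: set up the family.} A simple path in a tree is determined by its two endpoints, since the path between any two vertices is unique. Hence $|\mathcal{F}| \le \binom{n}{2} + 1 \le n^2$, counting the empty path. Distinct paths with at least one edge have distinct edge sets, so separating edge sets separates paths. Plugging this into Theorem~\ref{lem:deterministic-separation} with $\gamma \le n-1$ would only give weights up to $O(n^6)$.

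\textbf{Step 2: choose an ordering that keeps projections small.} To reach $O(n^4)$, we use the Alternate Deterministic Separation Lemma and need an edge ordering $e_1<\dots<e_{n-1}$ for which $|\mathcal{F}_{\le n-1}| = O(n^2)$. Root $T$ at a vertex $r$ and order the edges in BFS order, that is, by nondecreasing depth of their lower endpoint.

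Consider a path $P$ between $u$ and $v$ passing through their lowest common ancestor. Its projection onto $\{e_1,\dots,e_i\}$ is $P\cap T_i$, where $T_i$ is the subtree spanned by the first $i$ edges. In BFS order $T_i$ is a connected subtree containing $r$. The intersection of a path with a connected subtree is again a path, because trees are convex under intersection: the intersection of two subtrees is connected. So every projected set is the edge set of some simple path of $T$, possibly empty. Therefore $\mathcal{F}_{\le n-1}\subseteq\mathcal{F}$, and $|\mathcal{F}_{\le n-1}|\le n^2$.

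\textbf{Step 3: conclude.} The alternate lemma now gives weights in $\{1,\dots,n^4\}$. Polynomial running time follows directly from the greedy construction. At step $i$ we enumerate the $O(n^2)$ projected paths, compute their weights, and form the forbidden set $G_i$ of size $O(n^4)$. We then pick the smallest positive integer not in $G_i$. Over $n-1$ steps this is polynomial.

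\textbf{Main obstacle.} The key step is Step 2: verifying that every prefix edge set induces a connected subtree, so that projections stay inside the path family. Any ordering in which each prefix forms a connected subgraph works; BFS and DFS from a root both qualify. If some subtlety arises, the fallback is to bound projections directly. Each projection is determined by its two endpoints in $T_i$, which would still give an $O(n^2)$ count.
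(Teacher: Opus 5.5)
Your proof is correct, and it follows a different and sharper route than the paper's. The paper applies Theorem~\ref{lem:deterministic-separation} directly to $(E,\mathcal{P})$ with the generic bound $\gamma^2|\mathcal{P}|^2\le n^2\cdot n^4$. That gives weights only up to $O(n^6)$, and the paper's proof stops there, so it never reaches the $O(n^4)$ stated in the theorem.

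Your extra idea supplies the missing step. You order the edges so that every prefix spans a connected subtree containing the root. Then the projection of a path onto any prefix is again the edge set of a path, so $\mathcal{F}_{\le n-1}\subseteq\mathcal{F}$ and $|\mathcal{F}_{\le n-1}|\le n^2$. This is the tree analogue of the paper's interval proposition, where projections of intervals are intervals. Combined with the Alternate Deterministic Separation Lemma, it gives the claimed $n^4$ bound.

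Two small points would make the argument airtight. First, BFS order has the prefix property because the parent edge of any edge has strictly smaller lower-endpoint depth, so it comes strictly earlier. Hence every prefix is a union of root paths. Second, you need to pass from ``the common vertex set is connected'' to a statement about edge sets. For this, note that a connected subgraph of a tree is induced, since an extra edge between two of its vertices would close a cycle. Therefore $E(P)\cap E(T_i)$ is exactly the edge set of the subpath $T[V(P)\cap V(T_i)]$.
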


\begin{proof}
Every simple path in a tree is uniquely determined by its two endpoints. Thus the number of 
unordered simple paths in $T$ equals the number of unordered pairs of distinct vertices, namely $
|\mathcal{P}| = \binom{n}{2} = \frac{n(n-1)}{2} \leq n^2,
$
where $\mathcal{P}$ denotes the family of edge-sets of all simple paths in $T$.
Apply the Deterministic \emph{Separation Lemma} (Lemma~\ref{lem:deterministic-separation}) to the set system
$(E,\mathcal{P})$. The lemma guarantees a deterministic assignment of positive integers
$$w:E\to\{1,2,\dots,n^2|\mathcal{P}|^2\}$$
such that the sum of weights of any two distinct sets in $\mathcal{P}$ are different. Since
$|\mathcal{P}| = O(n^2)$, we have $n^2|\mathcal{P}|^2 = O(n^6)$, giving the claimed polynomial bound. Hence we have a resulting edge-weight assignment satisfying the property that every simple path
in $T$ has a unique total weight.
\end{proof}

\section{Conclusion}
We presented a deterministic procedure for assigning integral values to elements of a universe so that all sets drawn from a prescribed family are uniquely distinguishable. Our construction derandomizes the earlier probabilistic version of the \emph{Separation Lemma}, which ensures that each set in the family can be assigned a unique value. Moreover, we can introduce additional elements to guarantee that there is a unique way of constructing sets to achieve a target value, any such set must precisely contain a set from the original family, using exactly all its elements together with the new elements. This idea can be useful in various hardness reductions. Beyond these applications, the approach can serve as a heuristic for estimating or counting solutions in certain combinatorial settings. Future directions include extending the framework to richer combinatorial families, thereby broadening its impact in the algorithmic and combinatorial domains.

\bibliographystyle{plain}
\bibliography{arxivversion}
\end{document}